\documentclass[a4paper,UKenglish,cleveref,thm-restate,pdfa]{lipics-v2021}

\newcommand{\parag}[1]{\subparagraph{#1}}
\newcommand{\itemname}[1]{\textbf{\textsf{#1}}}

\newcommand{\sus}{\mathsf{SUS}}
\newcommand{\lcp}{\mathsf{LCP}}
\newcommand{\last}{\mathsf{last}}
\newcommand{\lt}{<}

\usepackage{ragged2e} 
\usepackage{graphicx}
\usepackage{svg}
\usepackage{balance}
\usepackage{xcolor}
\usepackage{xspace}
\usepackage[utf8]{inputenc}
\usepackage{float}
\usepackage{amssymb}
\usepackage{amsmath}
\usepackage{multirow}
\usepackage{booktabs}
\usepackage{makecell}
\usepackage{amsthm}
\usepackage{thmtools}
\usepackage[makeroom]{cancel}
\usepackage{xcolor, soul}
\usepackage{siunitx}
\usepackage{tablefootnote}

\usepackage{fancyvrb}
\VerbatimFootnotes

\usepackage[noEnd=true,indLines=true]{algpseudocodex}
\usepackage[ruled]{algorithm}
\tikzset{algpxIndentLine/.style={draw=black}}

\usepackage{afterpage}

\theoremstyle{theorem}
\newtheorem{mytheorem}[theorem]{Theorem}

\newtheorem{newdefinition}[theorem]{Definition}

\title{The Anti-Lexicographic SUS-Anchor:\\ An Empirically Optimal Selection Scheme}
\titlerunning{The Anti-Lexicographic SUS-Anchor}

\author{Ragnar Groot Koerkamp}{Karlsruhe Institute of Technology,
  Germany}{ragnar.grootkoerkamp@gmail.com}{https://orcid.org/0000-0002-2091-1237}{}

\ccsdesc{Theory of computation~Sketching and sampling}
\ccsdesc{Applied computing~Bioinformatics}

\authorrunning{R. Groot Koerkamp}
\Copyright{Ragnar Groot Koerkamp}

\supplement{}
\supplementdetails[subcategory={Source Code}, cite=minimizers-repo,
  swhid={swh:1:dir:6b775d1315385c94ecbd6058cfbba2476fbeb0b4}, swhdelimiter={,}]{Software}{https://github.com/RagnarGrootKoerkamp/minimizers}

\acknowledgements{I thank the anonymous reviewers for their helpful feedback.}

\keywords{Minimizers, Sampling scheme, Sketching, Maximal suffix, Smallest unique substring}

\nolinenumbers{}
\EventEditors{Nadia El-Mabrouk and Fabio Vandin}
\EventNoEds{2}
\EventLongTitle{26th International Conference on Algorithms for Bioinformatics (WABI 2026)}
\EventShortTitle{WABI 2026}
\EventAcronym{WABI}
\EventYear{2026}
\EventDate{August 31--September 2, 2026}
\EventLocation{L'Aquila, Italy}
\EventLogo{}
\SeriesVolume{390}
\ArticleNo{22}

\begin{document}

\maketitle
\begin{abstract}
\vspace{-2\baselineskip}
\subparagraph{Motivation.}
\emph{Selection schemes} provide a way to select a subset of positions in a text in
such a way that no two consecutive selected positions are more than \(w\) apart.
These selected positions can be used as ``anchor'' points for text indices
such that every sufficiently long pattern corresponds to at least one anchor
\cite{lc-anchors}.
Closely related are \emph{sampling schemes}, that sample a \emph{\(k\)-mer} from each window
of \(w\) consecutive \(k\)-mers in a text, and the more restricted \emph{minimizer
schemes}, that achieve this by taking the smallest \(k\)-mer according to some
order.  In recent years, there has been a renewed interest in the search for
\emph{low density} schemes that select/sample only a small fraction of
positions/\(k\)-mers.

The mod-minimizer \cite{modmini} provides a near-optimal density of \(1/w\) as
\(k / w \to \infty\), while schemes such as the greedy minimizer work well for explicit
small parameters roughly in the regime \(k \leq 2w\), for \(k\) and \(w\) up
to \(15\) or so.

When \(k < \log_\sigma w\) is small, minimizer schemes cannot do well
\cite{asymptotic-optimal-minimizers}. 
As a first step towards low density sampling schemes in this regime, we fix
\(k=1\) and search for a near-optimal selection scheme to improve the existing
bidirectional string anchors (bd-anchors) \cite{bdanchors,lc-anchors}.

\vspace{-0.5em}
\subparagraph{Methods.}
Inspired by bd-anchors, we introduce the \emph{smallest unique substring} or SUS-anchor:
given a window, this considers all suffixes that do not occur as a substring
elsewhere in the window. It then samples the start position of the smallest suffix according to the
new \emph{anti-lexicographic} order that minimizes the first character and maximizes
the remaining characters.
We give a linear-time and \(O(w)\) space streaming algorithm to compute all
SUS-anchors of a string.

\vspace{-0.5em}
\subparagraph{Results.}
For alphabet size \(\sigma=4\) and \(k=1\), the parameter-free anti-lexicographic
SUS-anchor empirically has density \(<1\%\) away from the density lower bound and
is at least \(4\times\) closer to the lower bound than all other tested schemes.
For alphabet size \(\sigma=2\), the density is at most \(10\%\) above the lower
bound, which still improves \(2\times\) to \(3\times\) the overhead of the random
minimizer and is consistently better than the greedy minimizer.
Likewise, the anti-lexicographic minimizer performs better than all other schemes
apart from the greedy minimizer.
\end{abstract}

\section{Introduction}
\label{sec:introduction}
Minimizers \cite{winnowing,minimizers} are a technique to subsample the
\emph{\(k\)-mers} of a string such that consecutive samples are at most \(w\) positions apart.
Minimizer sampling has many applications in bioinformatics.
In particular, a number of data structures exploit this \emph{window guarantee} and
use minimizers as a locality-sensitive hash: SSHash
\cite{sshash,sshash2} and \(k\)ache-hash \cite{kache-hash-preprint}
use the minimizer of a string as the key in a hash map, while the U-index
\cite{u-index} is a text index that considers the minimizers of a pattern.
In a similar way, the text index of Loukides et al. and Ayad et al.
\cite{bdanchors,lc-anchors} works for patterns of length at least \(\ell\) and selects \emph{positions} in the text (\emph{anchors}) that are at most \(w=\ell\)
apart in a locally consistent way, so that each queried pattern of length \(\geq
\ell\) can be found via its anchor. Improving their bidirectional string anchors
is the main goal of this work, as this directly leads to a smaller text index.

From another point of view, the minimizers of a string are a \emph{sketch} or
compressed~(lossy) representation, allowing for faster processing.  Specific
applications include seeding for read-mapping as done by minimap
\cite{minimap}, the minimizer-space De Bruijn graph \cite{mdbg}, sketching for
Jaccard similarity in mashmap \cite{mashmap}, host depletion in Deacon
\cite{deacon-preprint}, and more \cite{minimizer-sketches}.
In this direction, there are also other schemes such as syncmers
\cite{syncmers} and universal hitting sets~\cite{docks-wabi,docks,improved-minimizers}. In these schemes, each \(k\)-mer
is sampled independently from its context, which may loosen the window guarantee
and does not directly give a suitable key or anchor for each window.

\parag{Recent work.}
There is a long line of active work on minimizers.
Of particular interest is the \emph{density}, which is the expected fraction of
sampled positions.
Recently, the mod-minimizer~\cite{modmini,oc-modmini} introduced schemes with near-optimal
density for
\(k\to\infty\). The GreedyMini~\cite{greedymini} is the state of the art
for smaller \(w\) and \(k\) roughly up to \(15\), achieving particularly close to
optimal density for \(k=w+1\), \(k=w\), and \(k\) just below \(w\).
Shur et al. recently introduced
\emph{spacers} \cite{10-minimizers} that prefer sampling \(k\)-mers starting with
\texttt{10} (after projecting to a binary alphabet) for which the next occurrence of \texttt{10} is as far ahead as possible.
Spacers improve the density of the ABB+ scheme we introduce here, and are
especially good when \(w\) is large.
For particularly small parameters \((\sigma,k)=(2,\leq 6)\) and \((\sigma,k)=(4, 2)\),
exactly optimal \emph{minimizer} schemes are given by the search algorithm OptMini \cite{optmini} and
analysed theoretically by Shur \cite{on-minimizers-of-minimum-density}.

This raises the question: \textbf{can we find near-optimal \emph{sampling} schemes for small \(k\), and in
particular for \(k=1\)}?
An ILP search suggests that the answer is yes: this is always possible when
\(k\equiv 1\mod w\)~\cite{sampling-lower-bound}, and thus we are motivated to search for a ``clean'',
constant-space scheme.

Some further recent work includes SimdMinimizers \cite{simd-minimizers},
a SIMD-based algorithm that computes \emph{random minimizers} at around 500 Mbp/s.
A precise analysis of the density of the random minimizer is given in \cite{random-mini-density}.
Vigemers \cite{vigemers} reduce the imbalance when using minimizers for
partitioning, and multiminimizers \cite{multiminimizers} reduce the density at
the cost of more compute.

\parag{Sampling and selection schemes.}
Minimizer schemes hash each \(k\)-mer and sample the \(k\)-mer with the smallest
hash in a window of \(w\) \(k\)-mers (or \(\ell=w+k-1\) characters).
Minimizers cannot achieve a good density of \(O(1/w)\) for constant \(k\) as
\(w\to\infty\) \cite{asymptotic-optimal-minimizers},
but this restriction does not apply to \emph{sampling schemes}, which have more
freedom because they are not required to first hash all \(k\)-mers.
In particular, \emph{selection schemes} with \(k=1\) simply sample a single \emph{position}
based on a window of length \(w=\ell\).
The main reason that selection schemes can achieve better density than
sampling schemes is that for a fixed sized window, they can also sample the last
\(k-1\) positions, and thus achieve a greater distance between consecutive samples.
Bidirectional string anchors (bd-anchors) are one such selection scheme~\cite{bdanchors}. These sample the start position of the
smallest \emph{rotation} of a window and achieve a density of \(O(1/w)\).
For large \(w\) and alphabet size \(\sigma=4\), bd-anchors have a density around
\(20\%\) above the lower bound. Our goal is to reduce this to as close to \(0\%\)
overhead as possible.

\enlargethispage{3em}
\parag{Contributions.}
In this paper, we introduce SUS-anchors, short for \emph{smallest unique substring} anchors.
As the name implies, given a window, these sample the start position of the
smallest substring that does not occur a second time. The \emph{lexicographic}
SUS-anchor is equivalent (after reversing the order of the alphabet) to the
the \emph{maximal suffix} of each window. 
Our main results are as follows:
\begin{itemize}
\item We experimentally show that SUS-anchors with the anti-lexicographic order (see
below) have density within
\(1\%\) of optimal for \(\sigma=4\) and any \(w\).
\item We give an \(O(n)\) time and \(O(\ell)\) space sliding-window algorithm to compute the
SUS-anchors of a string of length \(n\) that is based on the \emph{monotone-queue}
approach \cite{simd-minimizers}.
\end{itemize}

\noindent
Smaller contributions that are of possible independent interest are:
\begin{itemize}
\item \emph{character-based orders} that generalize e.g. the
lexicographic, alternating \cite{minimizers}, and ABB~\cite{minimally-overlapping-words} order; and
\item two new character-based orders: the \emph{ABB+} and \emph{anti-lexicographic} order.
\end{itemize}

\noindent
Parts of these results have been previously introduced in the
author's PhD thesis \cite{thesis}.
\section{Preliminaries}
\label{sec:preliminaries}
\subsection{Minimizer, sampling, and selection schemes}
\label{sec:minimizer-sampling-and-selection-schemes}
For a more in-depth introduction to minimizers, we refer the reader
to the survey by Zheng et al. \cite{minimizer-sketches},
the mod-minimizer paper
\cite{modmini}, and the author's thesis \cite{thesis}. Here we briefly
introduce the required notation and concepts.

\parag{Notation.}
For an integer \(w\), we write \([w] = \{0, 1, \dots, w-1\}\), and for a string
\(W=w_0\dots w_{|W|-1}\)
we use \(W[i\dots j)\) to indicate the substring \(w_i\dots w_{j-1}\).
We assume an alphabet \(\Sigma\) of size \(\sigma\).

\parag{Sampling schemes.}
Given parameters \(w\geq 1\) and \(k\geq 1\), a \emph{window} is a string over \(\Sigma\) of length
\(\ell=w+k-1\) that contains exactly \(w\) \(k\)-mers.
A \emph{local sampling scheme} or just \emph{sampling scheme} is a function \(f:
\Sigma^\ell \to [w]\) that indicates that from window \(W\), the \(k\)-mer
\(W[f(W)\dots f(W)+k)\) is \emph{sampled}.

Given a text \(T\), we are interested in the \emph{set} of all sampled positions when
sliding window \(W\) over the text.
Consecutive sampled positions differ by at most \(w\), and
a sampling scheme is \emph{forward} when the
sampled position in \(T\) never decreases when sliding the window forward.

Often considered are \emph{minimizer schemes} \cite{winnowing,minimizers}, which are forward sampling schemes that sample (the 
position of) the smallest \(k\)-mer in the window according to some order as given by a hash function.

\parag{Selection schemes.}
In this paper, we are particularly interested in \emph{selection} schemes, which are
sampling schemes with \(k=1\) \cite{small-uhs}. In particular, these select a
\emph{position} rather than \emph{\(k\)-mer} in each window. Given a fixed window
(context) of \(\ell\) characters, the \(w=\ell\) selection scheme has greater
freedom than an \(w=\ell-k+1\) sampling scheme, and thus can achieve lower density.

\parag{Density.}
The \emph{particular density} of a sampling scheme is the fraction of positions that
are sampled. The \emph{density} is the expected value of the particular density on a
random string with length going to infinity.
For forward schemes, the density can be computed as the probability that
two \emph{consecutive} windows \(W\) and \(W'\) (overlapping by \(\ell-1\) characters and
together forming a \emph{context} of \(\ell+1\) characters) sample a different position.

\parag{Density lower bounds.}
A trivial lower bound on the density of sampling schemes is \(1/w\), since at
least one in every \(w\) positions is sampled. The best
current lower bound for \emph{forward} sampling schemes \cite{sampling-lower-bound}
simplifies (with a small loss of
accuracy) to \(\frac 1{w+k} \left\lceil \frac{w+k}w\right\rceil\).
For \(k=1\), this gives a lower bound for forward selection schemes of
\(\frac1{w+1}\left\lceil\frac{w+1}w\right\rceil = \frac 2{w+1}\).
The main insight is that the density equals the expected number of samples
in a random cyclic string (cycle) of length \(w+1\), and in any such cycle,
at least two different positions must be sampled.

For \emph{minimizer} schemes, a lower bound is given by \(1/\sigma^k\) \cite{asymptotic-optimal-minimizers}: if
\(w\to\infty\), exactly all occurrences of the smallest \(k\)-mer will be sampled,
and these occur every \(\sigma^k\) positions in expectation.
However, this bound does \emph{not} hold for general sampling schemes, and
it is exactly this property that will allow us to design near-optimal sampling
schemes for small \(k\).
\subsection{Bidirectional string anchors}
\label{bd-anchors}
\emph{Bidirectional string anchors} (\emph{bd-anchors}) are a \(k=1\) selection scheme introduced by
Loukides, Pissis, and Sweering for the purpose of
text indexing \cite{bdanchors}.
Given a window, they sample the (leftmost) start position of the
lexicographically smallest rotation.
A drawback of bd-anchors is that they are not forward: the window \texttt{ZABAAC} has
\texttt{AAC...} as smallest rotation, while the shifted window \texttt{ABAACA} has \texttt{AAB...} as
smallest rotation. After further shifting to \texttt{BAACAY}, the smallest rotation is \texttt{AAC...}
again: the \texttt{AB} prefix caused the selected position to jump around.
To ensure the density\footnote{Even though \(\ell=w+k-1=w\), we
will use \(w\) for the density and \(\ell\) for the length of the window.} is in \(O(1/w)\),
\emph{reduced} bd-anchors \cite{bdanchors} avoid sampling the last \(r=C\lceil \log_\sigma \ell\rceil\)
positions, where \(C=4\) in theory but in practice \(C=3\) or less
suffices.
Given the choice of \(r>\log_\sigma \ell\), in most windows of a random string
the bd-anchor is the same as the smallest lexicographic \((r+1)\)-mer.
Note that reduced bd-anchors are still not forward.

In practice, bd-anchors can be computed in \(O(n\ell)\) time by using Booth's linear-time
algorithm for the lexicographically minimal rotation \cite{smallest-rotation}.
An \(O(n)\) time algorithm is possible using a data
structure of Kociumaka \cite{minimal-suffix-cpm16},
but this is mostly of theoretical interest only since it requires \(O(n)\)
words of space for e.g. a suffix array. 
Theorem 3 of \cite{bdanchors} gives an \(O(n)\) time and \(O(\ell)\) space
algorithm when the alphabet size is constant (\(\sigma = O(1)\)) by using Kociumaka's method on overlapping chunks of size e.g. \(2\ell\).
A practical linear time algorithm to compute reduced bd-anchors with \(r=\lceil
4\log_\sigma w\rceil\) is given by \cite{lc-anchors}. This works by first
computing the lexicographic minimizers of size \(k=r\) and only comparing
the corresponding rotations in case of ties. Additionally, the \emph{randomized}
reduced bd-anchor is introduced that mimics the random minimizer: it considers
the smallest \(r\)-mer in a window under a pseudo-random order and breaks ties
by preferring the lexicographically smallest rotation. In practice, the density
of randomized reduced bd-anchors is indistinguishable from the density of random minimizers.

\enlargethispage{2em}
\subsection{Maximal suffixes}
\label{sec:maximal-suffixes}
Separate from the literature on minimizers, there is a line of work on
the \emph{non-empty minimal suffix} and \emph{maximal suffix} of a string \cite{factorizing-words}.
While these two problems appear similar, as one might simply reverse the order
of the alphabet, a crucial point is that a string \(A\) is always smaller than
\(B\) when \(A\) is a prefix of \(B\).
Thus, the minimal suffix prefers shorter suffixes, while the maximal suffix
prefers longer suffixes. Because of this, minimal suffixes are less stable as we
slide a window over a text, and we do not further consider them.

Babenko et al.~\cite{minimal-maximal-suffix-cpm13} introduce an algorithm that preprocesses a
string of length \(n\) into a linear-space data structure and can then find the
maximal suffix of query substrings of length \(\ell\) in \(O(\log \ell)\) time, which 
was improved to \(O(1)\) query time in \cite{minimal-maximal-suffix-tcs}.
\section{Character-based orders}
\label{sec:character-based-orders}
To capture some of the existing ``lexicographic-like'' orders, we define
\emph{character-based orders} that compare strings one character at a time\footnote{A slight generalization of this concept that we do not otherwise need in this paper uses orders \(O_i\) on strings of length \(i\) and then compares \emph{prefixes} \(a_0\dots a_\ell \lt_{O_\ell} b_0\dots b_\ell\) instead.}.
This is a natural requirement in our setting, because it allows comparing
suffixes of a string without worrying that future characters (after shifting the
window) will change the order, as long as one is not a prefix of the other.

\begin{newdefinition}[Character-based order]
An order \(O\) on strings over \(\Sigma\) is \emph{character-based} if there exist
orders \(O_i\) on \(\Sigma\) for \(i\in \{0, 1, 2, \dots\}\) such that for
all strings \(A=a_0\dots a_{|A|-1}\) and \(B=b_0\dots b_{|B|-1}\) with longest common
prefix \(\ell = \mathsf{lcp}(A,B)\) we have
\begin{equation*}
A <_O B \quad\quad \text{iff} \quad\quad (\text{$A$ is a strict prefix of $B$}) \text{ or } a_\ell <_{O_\ell} b_{\ell}.
\end{equation*}

The orders can be either \emph{total orders}, or \emph{linear preorders} when equalities
are allowed.

\end{newdefinition}

This scheme encapsulates the \textbf{lexicographic order} on strings, where each \(O_i\) is
simply the lexicographic order on \(\Sigma\).
The main drawback of the lexicographic order is that it clusters small strings:
since \texttt{AAAAX} is small, the next \(k\)-mer \texttt{AAAXY} is also small, possibly causing
consecutive positions to be sampled as minimizers.
Most of the following schemes instead look for a \emph{transition} from a small
character (\texttt{A}) to a large (\texttt{Z}) or non-small (\texttt{BCD..Z}) character.

The \textbf{alternating order} \cite{minimizers} uses the lexicographic order for even \(i\), and the reverse
lexicographic order for odd \(i\), so that \texttt{AZAZAZ...} is the smallest string. The \textbf{ABB order} \cite{non-overlapping-codes,minimally-overlapping-words} uses the lexicographic order for
\(O_0\), and for \(i>0\) it uses the order
\[1=_{O_i}2=_{O_i}\dots=_{O_i}\sigma-1 <_{O_i} 0,\]
so that any string like \texttt{ABBBB...} or \texttt{AXYZDEF...} is minimal.
This scheme has the nice property that occurrences of small strings starting in
\texttt{A} and not containing further \texttt{A}'s are disjoint \cite{non-overlapping-codes}.
(The concept of sets of minimally overlapping strings has been
further explored in \cite{minimally-overlapping-words}, but these do not form a
character-based order and do not directly work with variable-length strings.)
\textbf{Vigemers} \cite{vigemers} are also a character-based order, where \(O_i\) is the
order after xor'ing by a character \(\gamma_i\).

A drawback of the ABB order is that it throws away some information: for example,
over the normal alphabet, \texttt{AB} and \texttt{AC} are considered equal. Thus, we also consider a version with tiebreaking, \emph{ABB+}:
\begin{newdefinition}[ABB+ order]
The \emph{ABB+} order first compares two strings via the ABB order, and, in case of
a tie, compares them via the plain lexicographic order.

\end{newdefinition}
Note that this is only useful when comparing strings (\(k\)-mers) of equal
length, and that the ABB+ order is not itself a character-based order.

The following scheme is more practical.
\begin{newdefinition}[Anti-lexicographic order]
The \emph{anti-lexicographic order} uses the lexicographic order for \(O_0\), and
\emph{reverse} lexicographic order for \(O_i\) for \(i>0\).

\end{newdefinition}
In this order, the smallest string is \texttt{AZZZZ...}.
\section{Smallest unique substring anchors}
\label{sec:smallest-unique-substring-anchors}
\parag{Intuition.}
Consider again the bd-anchor, which samples the start position of the smallest
rotation of a window \(W\).
As we saw in \cref{bd-anchors}, a drawback is that rotations ``wrap around'', and that
the character \(W[0]\) at
the start of a string influences whether the rotation starting at the last
character \(W[\ell-1]\) is small or not. This is easily fixed by considering only the smallest
\emph{suffix} instead. However, the smallest suffix is the empty suffix, and so we
could sample the first character of the smallest \emph{non-empty} suffix.
Unfortunately, this results in a bad density: a random window ends
in the smallest symbol \(0\) with probability \(1/\sigma\), in which case the suffix
consisting of just this character is the smallest one, and the density will be
around \(1/\sigma\) regardless of \(w\).
To avoid this, we impose the following restriction (which, in a way,
generalizes the ``non-empty'' condition): a suffix is only allowed to
be sampled if it does not occur elsewhere in the window as a substring.
Thus, we look for the \emph{smallest unique suffix} \(W[i\dots)\).
Let \(S=W[i\dots j)\) be the shortest prefix of \(W[i\dots)\) that is unique in~\(W\).
Then \(S\) is the \emph{smallest unique substring} (SUS\footnote{Not to be confused with the \emph{shortest} unique substring, which is
also commonly abbreviated as SUS.}) of \(W\) \cite{thesis}.
As an example, the string \texttt{CABBAB} has \texttt{AB} as its smallest suffix, and \texttt{ABBAB}
as its smallest \emph{unique} suffix. The smallest unique \emph{substring} is \texttt{ABB}, and
the index of the correspondingly sampled \emph{SUS-anchor} is 1.

\begin{newdefinition}[SUS-anchor]
Given a window \(W\) of length \(w=\ell\), the \emph{smallest unique substring}
is the smallest substring \(\sus(W)=W[i\dots j)\) that does not occur elsewhere in \(W\).
Then \(W[i\dots \ell)\) is the smallest unique \emph{suffix}, and the SUS-anchor is \(i\).

\end{newdefinition}

We note that for the purposes of the definition, it would be sufficient to
only consider the starting position of smallest unique \emph{suffix}.
Nevertheless, we prefer to consider the smallest unique \emph{substring} since it has properties that are similar to
the \(k\)-mers sampled by minimizer schemes. For example, both
optimal-density minimizer schemes and optimal-density SUS-anchors would always
sample non-overlapping \(k\)-mers and SUS-anchors respectively.

\parag{Variants.} Alongside the lexicographic variant,
the SUS-anchor allows variants based on character-based orders.
Specifically, we consider SUS-anchors with the anti-lexicographic order.
These variants work just like the lexicographic version: simply consider the set of
suffixes that do not occur as a substring elsewhere, and then take the
smallest of these using the chosen character-based order.

\parag{MS-anchor.}
It turns out that the concept of \emph{lexicographic smallest unique suffix} is exactly equivalent
to that of the \emph{maximal suffix}
\cite{factorizing-words,minimal-maximal-suffix-tcs} after reversing the order of the alphabet:
in both cases, we look for the extremal suffix where longer suffixes should be
preferred over shorter ones.
More generally, for any character-based order \(O\), the SUS-anchor under \(O\)
starts at the same position as the maximal suffix when suffixes are compared by
the reverse of \(O\).
Whereas the SUS-anchor explicitly requires considering a unique suffix, this is
handled implicitly by the maximal suffix: when one suffix is a prefix of a
longer suffix, the longer suffix is automatically the larger one (regardless of
the chosen character-based order \(O\)).

\begin{newdefinition}[MS-anchor]
Given a window \(W\) of length \(w=\ell\),
the \emph{maximal suffix anchor} or MS-anchor samples the position \(i\in[w]\) where the maximal
suffix \(W[i\dots \ell)\) of \(W\) starts.

\end{newdefinition}
\subsection{Properties}
\label{sec:properties}
We now state some observations and then prove some properties of SUS-anchors.

\begin{observation}
Given a window \(W\), the smallest unique suffix is smaller than all longer
suffixes, because all longer suffixes must be unique and thus larger than the
smallest unique suffix.

\end{observation}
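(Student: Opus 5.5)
The argument has two steps: first show that every suffix longer than the smallest unique suffix is itself unique, and then conclude using the definition of the SUS-anchor. Fix a window \(W\) and let \(W[i\dots\ell)\) be the smallest unique suffix under the chosen character-based order \(O\). Take any longer suffix \(W[i'\dots\ell)\) with \(i'<i\).

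\textbf{Step 1: longer suffixes are unique.} Suppose \(W[i'\dots\ell)\) also occurred elsewhere in \(W\), say at position \(p\neq i'\), so that \(W[p\dots p+\ell-i') = W[i'\dots\ell)\). This occurrence fits inside \(W\), so \(p+\ell-i'\leq \ell\).

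Now look at the tail of this occurrence. Since \(W[i\dots\ell)\) is a suffix of \(W[i'\dots\ell)\), it appears inside the copy at position \(p+(i-i')\). This is a valid position because \(p+(i-i')+(\ell-i) = p+\ell-i'\leq\ell\). It also differs from \(i\), because \(p\neq i'\).

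So \(W[i\dots\ell)\) occurs a second time in \(W\), contradicting its uniqueness. In words: any superstring of a unique substring is unique. Hence every longer suffix is unique.

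\textbf{Step 2: conclusion.} Every longer suffix therefore belongs to the set of candidates over which the SUS-anchor takes its minimum. The smallest unique suffix is, by definition, smaller than or equal to every candidate. Two distinct suffixes of different lengths are distinct strings, so they are not tied.

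One point needs care here: whether \(O\) is a total order or only a linear preorder. Under the definition of a character-based order, two strings of different lengths are equivalent only if neither is a prefix of the other and they agree at the first mismatch under a preorder \(O_\ell\). In that case ties are broken by whatever rule the definition uses, for instance taking the leftmost candidate. With this caveat, the smallest unique suffix is (weakly) smaller than every longer suffix, and strictly smaller when \(O\) is a total order.

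The only real content is the index bookkeeping in Step 1, checking that the shifted occurrence stays inside \(W\) and lands at a different position. I do not expect a genuine obstacle; the rest is direct.
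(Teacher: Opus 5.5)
Your proof is correct and is essentially the paper's argument made explicit. Every longer suffix has the smallest unique suffix as a suffix, so any second occurrence of it would give a second occurrence of the smallest unique suffix; hence all longer suffixes are unique candidates and cannot be smaller than the minimum. Your index bookkeeping and the preorder caveat are fine additions, and with leftmost tie-breaking even the preorder case gives strict inequality, since a tied longer suffix would start further left and would have been selected instead.
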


\begin{observation}
Removing the last character from the smallest unique substring results in
a non-unique substring.

\end{observation}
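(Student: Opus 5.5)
The claim is that if \(\sus(W)=W[i\dots j)\), then \(W[i\dots j-1)\) occurs elsewhere in \(W\) (when \(j-1>i\); for \(j=i+1\) the empty string trivially occurs elsewhere). I would argue by contradiction, using the minimality of \(j\) in the definition of the SUS and the fact that the order is character-based.

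First I fix the definitions. \(W[i\dots\ell)\) is the smallest unique suffix, and \(S=W[i\dots j)\) is the \emph{shortest} prefix of \(W[i\dots)\) that is unique in \(W\). Uniqueness is monotone under extension: if \(W[i\dots j')\) is unique, then every longer prefix \(W[i\dots j'')\) with \(j''\ge j'\) is also unique. This holds because an occurrence of the longer string at a position \(p\neq i\) would give an occurrence of its prefix at \(p\) as well. So the set of \(j'\) for which \(W[i\dots j')\) is unique is an up-closed interval \([j,\ell]\).

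Suppose for contradiction that \(W[i\dots j-1)\) is unique. Then \(j-1\) lies in this up-closed set, which contradicts the choice of \(j\) as its minimum. Hence \(W[i\dots j-1)\) is not unique.

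The only subtle point is whether \(\sus(W)\) is really defined as this shortest prefix, since the formal definition speaks of ``the smallest substring that does not occur elsewhere.'' I would reconcile the two readings as follows. Let \(U\) be any unique substring starting at some position \(p\). Then \(W[p\dots\ell)\) is unique, so it is at least the smallest unique suffix \(W[i\dots\ell)\).

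For \(p\neq i\), the two suffixes differ at a position inside their common length, so the comparison is decided within that length. Here I expect the main obstacle: one has to check that the comparison is decided within the shortest-unique-prefix lengths of the two suffixes, so that \(S\le U\) as substrings. A cleaner route is to note that the smallest unique substring must start at \(i\). Given this, among the unique prefixes of \(W[i\dots)\), the prefix order makes the shortest one the smallest, so \(\sus(W)=W[i\dots j)\) with \(j\) minimal. Under the prefix rule of any character-based order, a strictly shorter unique prefix would be strictly smaller, which would contradict minimality. The removal claim then follows as above.
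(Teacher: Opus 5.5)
Your argument is correct and matches the reasoning the paper leaves implicit; the paper states this as an observation without proof. Since \(\sus(W)=W[i\dots j)\) is the shortest unique prefix of the smallest unique suffix, the shorter prefix \(W[i\dots j-1)\) is non-unique by minimality of \(j\), and for \(j=i+1\) the empty string is trivially non-unique. Your reconciliation detour is not needed, and neither is its unproved claim that the smallest unique substring starts at \(i\): under the formal definition, \(W[i\dots j-1)\) is a strict prefix of \(\sus(W)\) and so is strictly smaller in any character-based order, hence it cannot be unique without contradicting the minimality of \(\sus(W)\).
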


\begin{mytheorem}[SUS-anchors are forward]
SUS-anchors are forward for any character-based order.

\end{mytheorem}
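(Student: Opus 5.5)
We consider a context \(C\) of \(\ell+1\) characters, with consecutive windows \(W=C[0\dots\ell)\) and \(W'=C[1\dots\ell+1)\). Let the SUS-anchor of \(W\) be at position \(i\) of \(C\). Forwardness means the anchor \(i'\) of \(W'\), again measured in \(C\), satisfies \(i'\geq i\). If \(i=0\) there is nothing to prove, since \(i'\geq 1\) always. So we assume \(i\geq 1\), which means the suffix \(C[i\dots\ell)\) also exists in \(W'\) as the prefix of its suffix \(C[i\dots\ell+1)\). We argue by contradiction: suppose \(1\leq i'<i\). Throughout we use the earlier observation that the smallest unique suffix is smaller than all longer suffixes of the window.

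\textbf{Step 1: compare the two suffixes of \(W\).} The suffix \(C[i'\dots\ell)\) of \(W\) is longer than \(C[i\dots\ell)\). By the observation applied to \(W\), it is unique in \(W\) and \(C[i\dots\ell)<_O C[i'\dots\ell)\).

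By the definition of a character-based order, there are two ways this inequality can hold.

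\textbf{Case (a): a strict difference.} Here there is a position \(m=\mathsf{lcp}(C[i\dots\ell),C[i'\dots\ell))\), strictly less than the length of the shorter suffix \(C[i\dots\ell)\), with \(C[i+m]<_{O_m}C[i'+m]\). Appending the character \(C[\ell]\) to both suffixes does not change their lcp or the first differing characters. Hence \(C[i\dots\ell+1)<_O C[i'\dots\ell+1)\) in \(W'\).

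\textbf{Case (b): a prefix relation.} Here \(C[i\dots\ell)\) is a strict prefix of \(C[i'\dots\ell)\). Then \(C[i\dots\ell)\) occurs at position \(i'\) as well as at \(i\), contradicting that it is unique in \(W\). So case (b) cannot occur. This step, together with handling the appended character, is the main point of care.

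\textbf{Step 2: uniqueness and minimality in \(W'\).} The suffix \(C[i\dots\ell+1)\) of \(W'\) extends a substring that is unique in \(W\). Suppose it occurred elsewhere in \(W'\), at some start \(p\geq 1\). Then \(C[i\dots\ell)\) would occur at \(p\) within \(C[1\dots\ell)\subseteq W\), with \(p\neq i\), which is impossible. So \(C[i\dots\ell+1)\) is a unique suffix of \(W'\).

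We also need \(C[i'\dots\ell+1)\) to be a candidate, i.e.\ unique in \(W'\); that is automatic, since \(i'\) is by assumption the anchor of \(W'\). From Step 1, case (a), the unique suffix \(C[i\dots\ell+1)\) is strictly smaller than \(C[i'\dots\ell+1)\). This contradicts the choice of \(i'\) as the smallest unique suffix of \(W'\).

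Hence \(i'\geq i\) and the scheme is forward. The argument only uses that comparisons are decided at the first mismatch (or by the prefix rule), and this is exactly the character-based property. For linear preorders, ties (\(=_{O_m}\)) need the anchor's tie-breaking rule (e.g. leftmost or longest) to be consistent. With the natural rule of preferring longer suffixes, Step 1 gives \(C[i\dots\ell)\leq_O C[i'\dots\ell)\) and the same argument goes through.
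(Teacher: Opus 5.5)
Your proof is correct and follows the paper's argument essentially step for step: it uses the observation that the smallest unique suffix beats every longer suffix, rules out the prefix case by uniqueness, notes that appending \(C[\ell]\) cannot change a comparison decided at a mismatch, and checks that \(C[i\dots\ell+1)\) stays unique in \(W'\), which you verify more carefully than the paper's one-line claim. Your closing remark on preorders has a small slip: under a prefer-longer tie-break, \(\le_O\) alone would not suffice, since a tie carried over to \(W'\) would be broken in favour of the longer suffix at \(i'\); what actually holds, and what the argument needs, is that \(i\) beating the longer suffix at \(i'\) already forces the strict inequality \(C[i\dots\ell)<_O C[i'\dots\ell)\).
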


\begin{proof}
Consider two consecutive windows \(W\) and \(W'\) with \(\sus(W) = W[i\dots j)\).
If \(i=0\), the scheme is trivially forward.
Otherwise, let \(0\leq i' < i-1\) be the start index of a suffix \(W'[i'\dots)\).
By the previous observation, \(W[i\dots) <_O W[i'+1\dots)\), and since \(W[i\dots)\)
is not a prefix of \(W[i'+1\dots)\) (for otherwise it would not be unique),
appending \(W'[\ell-1]\) to these two suffixes will not change their relative order.
Since \(W[i\dots)\) is unique in \(W\), \(W'[i-1\dots)\) is also unique in \(W'\),
and so \(W'[i-1\dots)\) is a smaller unique suffix than \(W'[i'\dots)\) for all \(i'<i-1\).
Thus, \(\sus(W')\) cannot start at an index less than \(i-1\), and
thus the SUS-anchor is forward.
\end{proof}

\begin{mytheorem}[Charged context]
Given two consecutive windows \(W\) and \(W'\) that together form a \emph{context}, the
sampled SUS-anchor changes if and only if either \(\sus(W)\) is a prefix of \(W\) or
\(\sus(W')\) is a suffix of \(W'\). In this case, the context is \emph{charged}.

\end{mytheorem}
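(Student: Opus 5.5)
The plan is to fix notation on the context and reduce the statement to a comparison of two sampled indices. Let \(C=c_0\dots c_\ell\) be the context, \(W=C[0\dots\ell)\) and \(W'=C[1\dots\ell+1)\). Write \(\sus(W)=W[i\dots j)\) and \(\sus(W')=W'[i'\dots j')\). In context coordinates the two anchors are \(i\) and \(i'+1\), so the sample is unchanged iff \(i'=i-1\). By the forward theorem we already have \(i'\geq i-1\). The work is therefore to characterise exactly when \(i'>i-1\).

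\textbf{(\(\Leftarrow\), charged implies a change).} If \(\sus(W)\) is a prefix of \(W\), then \(i=0\). Context position \(0\) is not part of \(W'\), so the anchor must move.

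If \(\sus(W')\) is a suffix of \(W'\), then \(j'=\ell\). Suppose for contradiction that \(i'=i-1\).
\begin{itemize}
\item By the second observation, \(U=W'[i'\dots\ell-1)=W[i\dots\ell)\) occurs at some other start \(p\neq i'\) in \(W'\).
\item If that occurrence lies inside \(C[1\dots\ell)\), it is an occurrence in \(W\) other than at \(i\). This contradicts the uniqueness of the suffix \(W[i\dots)\).
\item The only occurrence not inside \(C[1\dots\ell)\) is one ending at the last character of \(W'\). Since \(|U|=\ell-1-i'\), this forces \(p=i'+1\). Hence \(U\) equals its own shift by one, i.e.\ \(W'[i'\dots\ell)\) is a single-character run.
\end{itemize}
This self-overlap case is the main obstacle, and I expect it to be delicate. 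For example, with context \texttt{BAA} under the anti-lexicographic order, \(\sus(W)=\texttt{A}\) and \(\sus(W')=\texttt{AA}\) both sit at context position \(1\). So the implication as literally stated appears to fail for trailing runs. I would first try to rule this case out via the ordering/uniqueness argument. If that fails, I would prove the theorem with ``suffix of \(W'\)'' read as excluding this degenerate run case, i.e.\ requiring the non-unique prefix of \(\sus(W')\) to recur inside \(C[1\dots\ell)\). Alternatively, I would state the run case as an explicit exception.

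\textbf{(\(\Rightarrow\), a change implies charged).} Assume \(i\geq1\), so \(\sus(W)\) is not a prefix of \(W\), and assume \(i'\neq i-1\). By forwardness, \(i'\geq i\). We show that \(j'=\ell\).
\begin{itemize}
\item The suffix \(S=W'[i-1\dots)\) is unique in \(W'\), as in the forward proof.
\item Consider any unique suffix \(W'[q\dots)\) with \(q\geq i\) and \(\lcp(S,W'[q\dots))<\ell-q\), i.e.\ \(S\) and \(W'[q\dots)\) differ before \(W'[q\dots)\) ends. Its first \(\ell-q\) characters are the suffix \(W[q+1\dots)\).
\item The relative order of \(S\) and \(W'[q\dots)\) is then decided within \(W\). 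By minimality of \(W[i\dots)\) among unique suffixes of \(W\), together with the first observation, this gives \(S<_O W'[q\dots)\).
\item Here the non-prefix property from the definition of character-based orders does the work, exactly as in the forward theorem.
\end{itemize}
Since \(i'\) wins against \(S\), this comparison cannot be decided before the end, so it must be decided at the last character of \(W'\). Thus \(W'[i'\dots\ell-1)\) is a prefix of \(S\), hence occurs again in \(W'\), namely at \(i-1\neq i'\). So the shortest unique prefix of \(W'[i'\dots)\) is the whole suffix, i.e.\ \(j'=\ell\) and \(\sus(W')\) is a suffix of \(W'\).
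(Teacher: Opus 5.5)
Your counterexample is correct, and the defect is in the statement itself, not in your method. In the context \texttt{BAA} with \(\ell=2\) and \(\mathtt{A}<_{O_0}\mathtt{B}\), \(\sus(W)=\mathtt{A}\) and \(\sus(W')=\mathtt{AA}\) both start at context position \(1\), although \(\sus(W')\) is a suffix of \(W'\). (Swap the letters if \(\mathtt{B}<_{O_0}\mathtt{A}\).)

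The paper's proof of this direction breaks at exactly the step you isolate. It asserts that the prefix \(W'[i'\dots j'-1)\) of \(\sus(W')\) ``is not unique in \(W'\), and thus also not in \(W\)''. This is false when the only other occurrence uses the appended character, which is your single-letter-run case. So drop the plan to rule this case out via the ordering. Commit to your first amendment: \(W'[i'\dots\ell-1)\) must recur at a second position inside \(C[1\dots\ell)\). Under that hypothesis the paper's inference becomes valid. Your (\(\Rightarrow\)) argument produces exactly such a recurrence, at context position \(i\geq 1\), so the amended statement is a true equivalence.

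Your other option, listing ``the run case'' as an exception, must be phrased as ``\(\sus(W')=c^{m+1}\) with \(c^m\) unique in \(W\)''. Excluding all trailing runs is too coarse. In the context \texttt{BABAA} with \(\ell=4\) under the lexicographic order, \(\sus(W)=\mathtt{AB}\) starts at context position \(1\) and \(\sus(W')=\mathtt{AA}\) at context position \(3\). So the anchor moves even though \(\sus(W')\) is a trailing run.

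Your (\(\Rightarrow\)) direction follows the paper's argument: use forwardness to get \(i'\geq i\), show that comparing \(S=W'[i-1\dots)\) with \(W'[i'\dots)\) cannot be decided inside \(W\), and conclude that \(W[i'+1\dots)\) is a prefix of \(W[i\dots)\). Two steps need repair.

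\emph{Off-by-one.} \(W'[q\dots)\) has \(\ell-q\) characters, of which only the first \(\ell-q-1\) form \(W[q+1\dots)\). So the comparison is decided inside \(W\) only when \(\lcp(S,W'[q\dots))<\ell-q-1\). Your bound \(<\ell-q\) also admits a mismatch at the appended character, where your conclusion fails.

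\emph{Non-unique shorter suffix.} Inside \(W\) you compare \(W[i\dots)\) with the shorter suffix \(W[q+1\dots)\), which need not be unique in \(W\). Minimality over unique suffixes therefore does not apply directly, and the first observation only concerns longer suffixes. To fix this, let \(p\) be the leftmost occurrence of \(W[q+1\dots)\) in \(W\).
\begin{itemize}
\item \(W[p\dots)\) is unique: any other occurrence of it starts further left and yields an earlier occurrence of \(W[q+1\dots)\).
\item \(p\neq i\), and \(W[p\dots)\) has the same mismatch with \(W[i\dots)\).
\item Hence minimality of \(W[i\dots)\) gives \(W[i\dots)<_O W[q+1\dots)\).
\end{itemize}
The paper elides this same step: it passes from the failure of \(W[i\dots)<W[i'+1\dots)\) directly to ``\(W[i'+1\dots)\) is a prefix of \(W[i\dots)\)''.

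Finally, state explicitly that \(W'[i'\dots)\) is not a prefix of \(S\), since otherwise it would recur at \(i-1\). This guarantees that the decisive mismatch exists.
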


\begin{proof}
Let \(\sus(W) = W[i\dots j)\) and \(\sus(W') = W'[i'\dots j')\).

If \(i=0\) is sampled, \(W'\) cannot sample the same position.
If \(W'[i'\dots j')=W[i'\dots \ell)\) is a suffix, then \(W'[i'\dots j'-1)\) is not
unique in \(W'\), and thus also not in \(W\). Thus, \(W\) must sample a different position.

For the other direction, assume that \(i' > i-1\), so
that different text positions are sampled.
If \(i=0\) or \(i'=\ell-1\) we are done, so
assume \(i>0\) and \(i'<\ell-1\).

If \(W[i\dots) < W[i'+1\dots)\), then for every possible character \(W'[\ell-1]\) that
can be appended, we have \(W'[i-1\dots) < W'[i'\dots)\), which is in contradiction
with the fact that \(W'[i'\dots)\) is smaller than all longer suffixes.
Thus, \(W[i'+1\dots)\) must be a prefix of \(W[i\dots)\), so that the suffix is not
unique. This means that \(W'[i'\dots \ell-1)\) is not unique in \(W'\), and thus, the
SUS \(W'[i'\dots j')\) can only end in \(j'=\ell\), as required.
\end{proof}
\subsection{Computing SUS-anchors}
\label{sec:computing-sus-anchors}
Via the equivalence to maximal suffixes, we can compute all lexicographic
SUS-anchors using the theoretical algorithm of
\cite{minimal-maximal-suffix-tcs} that requires \(O(n)\) space, \(O(n)\)
preprocessing, and supports \(O(1)\) queries for substring suffix-maximum.
When the alphabet has constant size, we can use the same trick as for bd-anchors
to reduce the space usage:
Run the construction algorithm on chunks of size \(2\ell\) that overlap by
\(\ell-1\), so that the total space usage is \(O(\ell)\) and the total time remains
\(O(n)\).

Below, we first adapt the method of \cite{minimal-maximal-suffix-tcs} for
computing the maximal suffix of each window to our streaming setting.
Then, we adapt it to compute SUS-anchors for any character-based order.

\begin{mytheorem}
For any character-based order \(O\), the SUS-anchors of a string of length \(n\)
over a constant-size alphabet can be computed in \(O(n)\) time and \(O(\ell)\)
space.

\end{mytheorem}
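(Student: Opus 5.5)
Since constant alphabet is assumed, the natural route is reduction to maximal suffixes with chunking. As noted above, for any character-based order \(O\) the SUS-anchor of a window is the start of the maximal suffix of that window under the reversed order \(O^{-1}\). Reversing \(O\) amounts to reversing each \(O_i\). The prefix rule (a strict prefix is smaller) is preserved, so the maximal suffix still prefers longer suffixes, as required.

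For a general character-based order with position-dependent \(O_i\), the cleanest step is a string transformation. I would check this against the paper's intended adaptation. Since the comparison of suffixes \(W[a\dots)\) and \(W[b\dots)\) uses \(O_m\) at offset \(m\) from the start of each suffix, no single relabeling of the text works in general. However, the orders studied here have \(O_0\) special and \(O_i=O_1\) for \(i\geq1\) (anti-lexicographic, ABB), or depend on \(i\) only periodically (alternating).

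For the main case, \(O_i=O_1\) for \(i>0\), I would encode each position \(p\) as a pair. Its first component is the rank of \(T[p]\) under \(O_0\), and its second component is the rank of the following character under \(O_1\). Comparing suffixes of the pair string lexicographically then simulates \(O\), up to a careful treatment of the last character, where a sentinel keeps the prefix rule consistent. The new alphabet has size \(\sigma^2=O(1)\). For periodic orders, the same idea applies with blocks.

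For a fully general family \((O_i)\), I would instead adapt the incremental maximal-suffix machinery directly. Using the theorems on forwardness and charged contexts, the anchor moves only forward. The candidate set can then be maintained as a monotone queue of suffix starts that are not dominated: a start \(i'<i\) is dropped once \(W[i\dots)\) beats \(W[i'\dots)\) without one being a prefix of the other. By the earlier argument, appending characters never reverses such a comparison.

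Next comes the algorithm. I would run the linear-time preprocessing with \(O(1)\) queries for substring maximal suffix, following \cite{minimal-maximal-suffix-tcs}, on chunks \(T[c\ell\dots c\ell+2\ell)\) that overlap by \(\ell-1\). Every window lies entirely within one chunk. Each chunk is built in \(O(\ell)\) time and space, using constant alphabet for linear-time suffix array/LCP construction, and then answers its \(\ell\) window queries in \(O(1)\) each. Chunks are processed sequentially and discarded, giving \(O(n)\) total time and \(O(\ell)\) space. The transformed string, when needed, is produced on the fly per chunk.

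The main obstacle is the general character-based order: showing that the queried structure (or a monotone-queue variant) respects position-dependent comparisons \(O_i\), where standard suffix structures are order-agnostic only for a fixed symbol order. I would first try the pair-encoding reduction covering all orders used in the paper. If full generality is needed, I would prove an amortized bound for the monotone queue: each start is pushed and popped once, and each dominance check is an \(O(1)\) LCP query on the chunk's suffix structure, since the first mismatch position \(m=\mathsf{lcp}\) determines the comparison via \(O_m\) alone.
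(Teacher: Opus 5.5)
Your reduction route is sound but proves only part of the statement. The pair string $P'[p]=(\mathrm{rk}_{O_0^{-1}}(T[p]),\mathrm{rk}_{O_1^{-1}}(T[p+1]))$ does simulate $O^{-1}$ when $O_i=O_1$ for all $i\geq 1$ and these are total orders. It therefore handles the lexicographic and anti-lexicographic SUS-anchors by a genuinely different route from the paper: it reuses the black box of \cite{minimal-maximal-suffix-tcs} with worst-case $O(1)$ work per window, and needs no new algorithm. To make the last character concrete, query $P'[s\dots s+\ell-1)$, whose pairs use only characters of the window $T[s\dots s+\ell)$. Then compare the length-one suffix with the returned suffix via $O_0$ on the first character. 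A sentinel cannot live in a fixed chunk string, since the last position of one window is not last for the next. For preorders such as ABB the encoding breaks: equal $O_1$-ranks do not force equal characters, so the $O_0$-component of the next pair can decide a comparison that $O$ would continue. More importantly, the theorem covers \emph{every} character-based order, and your extension to periodic orders fails as stated. With overlapping $q$-grams, the only new information at offset $m\geq 1$ is the last component, ranked by $O_{q-1}$, although it should be ranked by $O_{(m+q-1)\bmod q}$. Non-overlapping blocks would require comparing suffixes from different residue classes, which a substring maximal-suffix structure does not provide.

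So the general case rests on your monotone-queue fallback, and that sketch omits exactly the difficulty the paper's proof addresses. Your observation that an order-agnostic LCP followed by one comparison under $O_\lambda$ decides a pair of suffixes is the right ingredient. However, when $t$ is pushed with $T[t]=T[\last(A)]$, the new suffix is a prefix of that of $\last(A)$, and the comparison is \emph{pending}. It is decided only when $T[t+\lambda]$ enters, where $\lambda=\lcp(T[t\dots),T[\last(A)\dots))$. By then more positions have been pushed behind $t$, so evicting $\last(A)$ is a deletion from the middle of the queue, not a pop at the back. Nor may you pop $\last(A)$ immediately because the lookahead says $t$ will eventually win, since $\last(A)$ may remain the anchor until time $t+\lambda$. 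Hence ``each start is pushed and popped once'' does not follow from what you wrote.

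The paper's first streaming adaptation of \cite{minimal-maximal-suffix-tcs} handles this with per-position event lists on a doubly linked list. Its final algorithm instead stores, for each $a\in A$, the hot time $x_a$ at which $a$ becomes the front. When pushing $t$ with $a=\last(A)$, it compares $T[a\dots a+\ell)$ with $T[t\dots t+\ell)$ under $O$, and proceeds as follows:
\begin{itemize}
\item If $a$ wins, it sets $x_t=a+\ell$.
\item Otherwise, let the takeover time be $x=\min(t+\lambda,a+\ell)$. If $x_a<x$, it keeps $a$ and sets $x_t=x$. If not, $a$ can never become the front, so it pops $a$ and repeats with the new $\last(A)$.
\end{itemize}
The front is popped when it leaves the window or when the next element's hot time is reached. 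Only with this rule does your amortization go through. Then $O(1)$-time LCP queries on chunked suffix arrays (using the constant alphabet) give $O(n)$ time and $O(\ell)$ space for arbitrary $O$.
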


\parag{A streaming algorithm for the maximal suffix.}
We first adapt the algorithm of \cite{minimal-maximal-suffix-tcs} to our streaming setting.
Suppose that we have so far processed \(T[0\dots t)\).
The algorithm maintains a doubly linked list \(A=(a_0, a_1, a_2, \dots)\) of \emph{active} text positions \(a_i\)
such that \(T[a_i\dots t) > T[j\dots t)\) for all \(a_i < j < t\).
This list is \emph{decreasing} in the sense that \(T[a_0\dots t) > T[a_1\dots t) >
T[a_2\dots t) > \dots\), and it plays a similar role to the \emph{monotone queue} in
the classic minimizer algorithm \cite{simd-minimizers}.
Unlike that original case, here we also store a list of \emph{events} for every text
position that can modify the middle of the list.

As we shift the window and increment \(t\) to \(t'=t+1\), we remove \(a_0\) when \(a_0 < t'-\ell\).
After appending \(T[t]\), we get a new suffix \(T[t\dots
t')\). We update \(A\) before pushing the new element \(t\).
\begin{itemize}
\item \itemname{Repeatedly pop \(A\).} We remove the last element \(\last(A)=A_{|A|-1}\) from \(A\)
as long as \(T[\last(A)\dots t') < T[t\dots t')\).
\item \itemname{Add event.} Then, if \(A\) is not empty and \(T[\last(A)\dots) < T[t\dots)\)
(taking into account upcoming characters), compute \(\lambda=\lcp(T[t\dots),
  T[\last(A)\dots))\) and add the event ``check if we should drop the element
preceding \(t\) from \(A\)'' to
the event queue for position \(t+\lambda+1\).
\item \itemname{Push \(t\).} Then, push \(t\) to \(A\) and increment \(t\) to \(t' = t+1\).
\item \itemname{Handle event.} As soon as \(t''\) reaches position \(t+\lambda+1\), we check if
\(t\) is still in \(A\), and if so, if it has now become larger than its
preceding element. Then repeatedly remove the preceding element as long as \(T[t\dots t'')\) is
larger, and end when either there is no preceding element, or when \(T[a_i\dots t'') > T[t\dots t'')\). In that case, again check if the order will flip in the
future, and if so, add a new event to the event queue.
\end{itemize}

\parag{Practical considerations.}
To make this practical, all \(\lcp\) computations can be bounded to \(\ell\), and
the event queues can be stored in a ring buffer with just \(\ell+1\) slots for the
current and next \(\ell\) positions.
Originally, the \(\lcp\) computations are done in constant time using the suffix array.
Instead, we simply do a word-by-word scan on the
bit-representation of the text, which takes expected constant time on random
strings since LCPs of, say, over \(64\) bits are unlikely in random strings.

\enlargethispage{1em}
\parag{A more direct algorithm for SUS-anchors.}
We now make some modifications to use the algorithm above for SUS-anchors.
Most importantly, \(A\) will now be an \emph{increasing} list of suffixes, with the
smallest at the start, with the caveat that unlike usually a string is considered \emph{smaller}
than all its prefixes.
A final modification drops the event queues and allows us to replace the doubly
linked list \(A\) by a simple double-ended queue, as shown in \cref{alg}. 
For each \(a\in A\), we now additionally store the text position \(x_a\) where \(a\) becomes
``hot'', i.e., we store that \(a\) becomes the first element of \(A\) (and thus the
start of the maximal suffix of the window) when the
character \(T[x_a]\) enters the sliding window.
When comparing \(t\) to \(\last(A)\), there are three cases. 
\begin{enumerate}
\item \(A\) is empty, in which case we push \(t\) and set \(x_t = t\).
\item \(T[\last(A)\dots \last(A)+\ell) \leq_O T[t\dots t+\ell)\), in which case we push \(t\) and note that it
becomes hot (the smallest suffix) when \(\last(A)\) falls out of the window, i.e., \(x_t = \last(A)+\ell\).
\item \(T[\last(A)\dots \last(A)+\ell) >_O T[t\dots t+\ell)\): \(t\) ``takes over'' \(\last(A)\) when character
\(t+\lambda\) enters the window, where \(\lambda = \lcp(T[t\dots), T[\last(A)\dots))\).
\begin{itemize}
\item If \(\last(A)\) is already hot by that time (\(x_{\last(A)} < t+\lambda\)),
push \(t\) with \(x_t = \min(t + \lambda, a+\ell)\).
\item If not, \(\last(A)\) will never be hot. We pop it and (repeatedly) compare \(t\) to \(\last(A)\).
\end{itemize}
\end{enumerate}
Once this process finishes, the first element of \(A\) indicates the maximum suffix of the
current window. We then increment \(t\) to the next text position and check
whether we reached the position \(x_{a_1}\) where the next element of \(A\) becomes
hot (either because \(a_0\) falls out of the window or \(a_1\) introduces a new
maximal suffix). In that case, we pop \(a_0\).

\begin{algorithm}[t]
  \centering
  \caption{Pseudo code for the streaming algorithm to compute the SUS-anchors as
  given in \cref{sec:computing-sus-anchors}.}
  \label{alg}
  \begin{algorithmic}[1]
	\Procedure{SUS-anchors}{$T$, $\ell$, $O$}

    \State $A \gets [] $ \Comment{Deque of (position, hot-time) pairs.}
    \State $S \gets []$ \Comment{List of SUS-anchor positions of each window.}
    \For{$t \gets 0, 1, \dots, n-1$}
        \State $x_t \gets t$ \Comment{Moment where $T[t\dots)$ becomes hot.}
        \While{$A$ is not empty}
            \State $(a, x_a) \gets \last(A)$
            
            \If{$T[a\dots a{+}\ell) \leq_{O} T[t\dots t{+}\ell) $}
                \Comment{New suffix is larger.}
                \State $x_t \gets a + \ell$
                \Comment{Hot when $a$ exits the window.}
                \State \textbf{break}
            \Else \Comment{New suffix is smaller: takes over $a$ at time $x$}
                \State $\lambda \gets \lcp\bigl(T[t\dots t{+}\ell), T[a\dots a{+}\ell)\bigr)$
                \State $x \gets \min(t + \lambda, a + \ell)$
                \If{$x_a < x$}
                \Comment{$a$ becomes hot before $t$ takes over; push $t$ after $a$.}
                    \State $x_t \gets x$
                    \State \textbf{break}
                \Else \Comment{$a$ will never be hot; discard it and continue.}
                    \State $A.\mathsf{pop\_last}()$
                \EndIf
            \EndIf
        \EndWhile
        \State $A.\mathsf{push}((t, x_t))$
        \State $(p, x_p) \gets A.\mathsf{front}()$
        \State $S.\mathsf{push}(p)$ \Comment{SUS-anchor of $T[t-\ell+1\dots t+1)$}
        \If{$p + \ell - 1 = t \textbf{ or }\bigl(|A| > 1\textbf{ and }A.\textsf{index}(1).x \le t+1\bigr)$}
            \State $A.\mathsf{pop\_front()}$
            \Comment{Pop from $A$ if window ends or next element becomes hot.}
        \EndIf
    \EndFor
    \State \Return $S$
    \EndProcedure
	\end{algorithmic}
\end{algorithm}
\section{Results}
\label{sec:results}

\begin{figure}[p]
\centering
\makebox[\linewidth]{
\includesvg{./plots/sus-anchors-1-vertical}
}
\caption{\label{selection-s4}Comparison of the density factor (the density multiplied by \(w+1\)) of selection schemes for \(\sigma\in\{2,4,32\}\), and varying \(w\). The lower bound is shown in red. For each scheme (minimizers, bd-anchors \cite{bdanchors}, double decycling \cite{minimum-decycling-set}, greedy minimizer \cite{greedymini}), the best parameter \(r\) or \(k\) is chosen for each \(w\) (and \(w\) is decreased accordingly). Changes to the value of \(r\) or \(k\) are annotated. On the right of each plot, we show scaling for large \(w\). Each data point is the particular density on an independent random string of length \(10^7\). For large \(\sigma\) and \(w\), there remains some variance in the estimated density, occasionally leading to estimates below the lower bound. Sus-anchors are consistently better than minimizers, and the anti-lexicographic order is better than the random order which itself is better than the lexicographic order.}
\end{figure}

In \cref{selection-s4}, we compare the density of SUS-anchors against the bd-anchors
selection scheme as well as minimizer schemes. For the minimizer schemes, we choose the \(k\) that
minimizes the density, and we adjust \(w\) to \(w'=w+1-k\) accordingly to preserve
the overall window length \(\ell = w'+k-1\).
The true density is approximated by computing the \emph{particular density} on a
uniform random string of \(10^7\) characters over alphabets of size \(2\), \(4\), and \(32\).
A new independent random string is used for each data point.

Lexicographic minimizers are roughly on-par with bd-anchors, and lexicographic
SUS-anchors are slightly better but still far from optimal.
Double decycling \cite{minimum-decycling-set} and the random minimizer appear to converge to density factor
\(2\) as \(w\to\infty\) and have similar density. The greedy minimizer
\cite{greedymini} only performs well for small \(w\) and \(\sigma\), as it requires
precomputed orders that are not always available. (We did not implement
the logic to extrapolate from smaller orders.) The greedy minimizer is the
only scheme that improves over anti-lexicographic minimizers (excluding very
small \(w\leq 6\)).

The anti-lexicographic minimizer consistently improves over
the random minimizer and double decycling. The anti-lexicographic
SUS-anchors consistently have the best density of all, while also being parameter-free.
They get surprisingly close to the lower bound: they are away from the lower bound by less than $10\%$ for
$\sigma=2$ and less than $1\%$ for $\sigma=4$ and $\sigma=32$.
For $\sigma=32$, the anti-lexicographic sus-anchor is indistinguishable from
optimal for all $w\leq 64$, and improves over the random minimizer that is
$>1\%$ away from optimal here.

\begin{table}[t]
\centering
\caption{\label{table}Density factor and throughput of various selection schemes for \(w=31\) (left half) and \(w=63\) (right half) on a random string and on chromosome 19. For minimizer schemes, the value of \(k\) that minimizes the density on chromosome 19 is chosen. Schemes are sorted by decreasing density.}
\begin{tabular}{lrrrrrrrr}
  \toprule
  &
\multicolumn{3}{l}{$w=31$\quad\quad Density} & Thrpt. & 
  \multicolumn{3}{l}{$w=63$\quad \quad Density} & Thrpt. \\

\cmidrule(lr){2-5}\cmidrule(lr){6-9}
Scheme & \(k\) & Random & Ch.19 & Mbp/s & \(k\) & Random & Ch.19 & Mbp/s\\
\toprule
Bd-anchor & 2 & 2.59 & 3.04 & 2.0 & 3 & 2.53 & 3.36 & 1.0\\
Greedy minimizer & 4 & 3.62 & 2.75 & 44.9 & 5 & 3.68 & 2.86 & 43.9\\
Lex minimizer & 4 & 2.47 & 2.91 & 35.3 & 5 & 2.42 & 3.16 & 35.4\\
Lex SUS-anchor & - & 2.33 & 2.66 & 19.0 & - & 2.34 & 2.96 & 18.4\\
Random minimizer & 4 & 2.18 & 2.27 & 40.3 & 4 & 2.08 & 2.19 & 40.7\\
Double decycling & 5 & 2.18 & 2.23 & 13.6 & 5 & 2.10 & 2.13 & 13.7\\
Antilex minimizer & 3 & 2.07 & 2.29 & 36.2 & 4 & 2.05 & 2.42 & 35.0\\
Antilex SUS-anchor & - & 2.01 & 2.18 & 17.9 & - & 2.01 & 2.33 & 17.6\\
\bottomrule
\end{tabular}
\end{table}

\parag{Density and throughput on chromosome 19.}
In \cref{table}, we compare the density on a random string with the particular density
on chromosome 19 of CHM13 \cite{chm13v2}, which has length \(61.7\) Mbp.
Most schemes have slightly worse density on chromosome 19 than on real data.
The greedy minimizer has worse than expected performance in both
cases, which is likely due to the fact that our parameters are large and require
extrapolating from smaller cases.
The anti-lexicographic SUS-anchor has the best performance on random data, but
on chromosome 19, it is outperformed by the random minimizer and double decyling \cite{minimum-decycling-set}
for \(w=63\).

Additionally, \cref{table} shows the obtained
throughput of evaluating each scheme on an Intel i7-10750H CPU running at 2.6 GHz,
but we note that the implementations have not been optimized for speed.
In particular, for bd-anchors we use a very naive \(O(nw^2)\) implementation. Most
other methods are \(O(n)\) in practice, as can be seen from the fact that they do
not slow down from \(w=31\) to \(w=63\).
SIMD-minimizers \cite{simd-minimizers}, an optimized library for computing
random minimizers, has a throughput over \(500\) MB/s, which is more than
\(10\times\) faster than all of the methods shown here.

\parag{Intuition on SUS-anchor density.}
Low-density schemes sample positions that are as far apart as possible.
When sampling \(k\)-mers, this means that ideally each \(k\)-mer does not overlap
with the preceding or succeeding sampled \(k\)-mer, or else as little as possible.\footnote{In particular, a minimizer scheme that has density exactly equal to the lower
bound \cite{sampling-lower-bound} must never sample overlapping \(k\)-mers: 
The lower bound uses that at least two \(k\)-mers must be sampled on each
primitive cycle of \(w+k\) characters, and if two overlapping \(k\)-mers are sampled,
there must be at least a third sampled \(k\)-mer.
Similarly, SUS-anchors can only have optimal
density if they never overlap.}
Thus, shorter SUS-anchors are beneficial since they are both less likely to
overlap and can be the anchor for more consecutive \(k\)-mers.

Taken together, this means that short unique substrings should work well.
In particular, the anti-lexicographic order is better than the lexicographic
order since it avoids sampling overlapping substrings: in the lexicographic
order, consecutive substrings like \texttt{AAAB} and \texttt{AAB} are both small, causing
overlapping samples when a window starts with such a string.
The anti-lexicographic order prevents this by explicitly preferring a large
suffix after removing the first character.
A weakness of the ABB scheme is that it needs to sample
a relatively long substring before it becomes unique: in a window like
\texttt{ABDDCBA}, 6 of the 7 characters (\texttt{ABDDCB}) are part of the ABB-SUS, so that
this sample can only cover a few windows. The anti-lexicographic SUS is simply
\texttt{AB} instead.

This also explains the effect of the alphabet size: for \(\sigma=2\), on average a
substring needs to have length \(\log_2 w\) before it becomes unique. Thus, the
SUS-anchors are longer and more likely to overlap.
An example of a case where anti-lexicographic SUS-anchors still overlap is given
by the consecutive windows \texttt{ABABABB} and \texttt{BABABBB}, with anti-lexicographic
SUS-anchors \texttt{ABABA} and \texttt{ABA}. For alphabet size \(\sigma=2\), such cases are
somewhat common.
For \(\sigma=32\) on the other hand, most characters in the window are unique (for
\(w\leq 32\)) and most anchors will consist of only one or two characters (e.g.
just \texttt{A} or \texttt{AZ}) and thus rarely overlap. 
\section{Conclusion}
\label{sec:conclusion}
\enlargethispage{2em}
We introduced anti-lexicographic SUS-anchors, gave a linear-time algorithm to
compute them, and showed that they empirically get to within \(1\%\) of the density lower bound
for \(\sigma=4\). They are consistently the lowest density selection scheme of all
tested variants, while also being parameter-free. As a second best,
anti-lexicographic minimizers also perform better than all other schemes apart
from the greedy minimizer.

\parag{Future work.}
More work is needed to \emph{prove} that, for example, anti-lexicographic SUS anchors
have a close-to-optimal density of \(2/(w+1) + o(1/w)\).
Given how close to optimal these schemes already are, the fact that \emph{exactly}
optimal schemes are known to exist for small \(w\) \cite{sampling-lower-bound},
including manually constructed for \(\sigma=2\) and \(w\leq 12\), the question remains whether
``clean'', constant space schemes can be developed for generic \(w\). The \emph{spacers}
of \cite{10-minimizers} are similar to our own ongoing work, and provide a
first step in this direction.

On a practical level, further work is needed to improve the current
monotone-queue-based algorithm to a rescan-like approach \cite{simd-minimizers}
or even a branchless variant that could be implemented in parallel using SIMD.

It would also be interesting to use SUS-anchors in combination with the
mod-minimizer~\cite{modmini}.
In particular, the mod-minimizer requires a lower bound \(r = \Theta(\log_\sigma w)\) on the
length of the chosen minimizers (typically around \(4\) in practice), and it would be
interesting to see if ``mod-SUS-anchors'' can work without such a restriction.
More generally, SUS-anchors can be extended to \(k>1\) by simply excluding the last
\(k-1\) suffixes, and preliminary results show that this gives good density up to
\(k \approx \log_\sigma w\).

Lastly, the SUS-anchor is a forward scheme. It is known that non-forward schemes
can be (at least) slightly better at times and break the forward lower bound,
but this is not well understood.

\bibliographystyle{plainurl}
\bibliography{bibliography}

@Article{chm13v2,
  author       = {Rhie, Arang and Nurk, Sergey and Cechova, Monika and Hoyt,
                  Savannah J. and Taylor, Dylan J. and Altemose, Nicolas and
                  Hook, Paul W. and Koren, Sergey and Rautiainen, Mikko and
                  Alexandrov, Ivan A. and Allen, Jamie and Asri, Mobin and
                  Bzikadze, Andrey V. and Chen, Nae-Chyun and Chin, Chen-Shan
                  and Diekhans, Mark and Flicek, Paul and Formenti, Giulio and
                  Fungtammasan, Arkarachai and Garcia Giron, Carlos and
                  Garrison, Erik and Gershman, Ariel and Gerton, Jennifer L. and
                  Grady, Patrick G. S. and Guarracino, Andrea and Haggerty,
                  Leanne and Halabian, Reza and Hansen, Nancy F. and Harris,
                  Robert and Hartley, Gabrielle A. and Harvey, William T. and
                  Haukness, Marina and Heinz, Jakob and Hourlier, Thibaut and
                  Hubley, Robert M. and Hunt, Sarah E. and Hwang, Stephen and
                  Jain, Miten and Kesharwani, Rupesh K. and Lewis, Alexandra P.
                  and Li, Heng and Logsdon, Glennis A. and Lucas, Julian K. and
                  Makalowski, Wojciech and Markovic, Christopher and Martin,
                  Fergal J. and Mc Cartney, Ann M. and McCoy, Rajiv C. and
                  McDaniel, Jennifer and McNulty, Brandy M. and Medvedev, Paul
                  and Mikheenko, Alla and Munson, Katherine M. and Murphy,
                  Terence D. and Olsen, Hugh E. and Olson, Nathan D. and Paulin,
                  Luis F. and Porubsky, David and Potapova, Tamara and Ryabov,
                  Fedor and Salzberg, Steven L. and Sauria, Michael E. G. and
                  Sedlazeck, Fritz J. and Shafin, Kishwar and Shepelev, Valery
                  A. and Shumate, Alaina and Storer, Jessica M. and Surapaneni,
                  Likhitha and Taravella Oill, Angela M. and Thibaud-Nissen,
                  Françoise and Timp, Winston and Tomaszkiewicz, Marta and
                  Vollger, Mitchell R. and Walenz, Brian P. and Watwood, Allison
                  C. and Weissensteiner, Matthias H. and Wenger, Aaron M. and
                  Wilson, Melissa A. and Zarate, Samantha and Zhu, Yiming and
                  Zook, Justin M. and Eichler, Evan E. and O’Neill, Rachel J.
                  and Schatz, Michael C. and Miga, Karen H. and Makova, Kateryna
                  D. and Phillippy, Adam M.},
  title        = {The complete sequence of a human {Y} chromosome},
  journal      = {Nature},
  year         = 2023,
  volume       = 621,
  number       = 7978,
  month        = aug,
  pages        = {344–354},
  issn         = {1476-4687},
  doi          = {10.1038/s41586-023-06457-y},
  publisher    = {Springer Science and Business Media LLC}
}

@article{minimap,
  author =       {Li, Heng},
  title =        {Minimap and miniasm: fast mapping and de novo assembly for
                  noisy long sequences},
  journal =      {Bioinformatics},
  year =         2016,
  volume =       32,
  number =       14,
  month =        {March},
  pages =        {2103–2110},
  issn =         {1367-4803},
  doi =          {10.1093/bioinformatics/btw152},
  publisher =    {Oxford University Press (OUP)}
}

@inproceedings{winnowing,
  author =       {Schleimer, Saul and Wilkerson, Daniel S. and Aiken, Alex},
  title =        {Winnowing: local algorithms for document fingerprinting},
  year =         2003,
  booktitle =    {Proceedings of the 2003 ACM SIGMOD international conference on
                  Management of data},
  series =       {SIGMOD/PODS03},
  publisher =    {ACM},
  month =        jun,
  doi =          {10.1145/872757.872770},
  collection =   {SIGMOD/PODS03}
}

@article{sshash,
  author =       {Pibiri, Giulio Ermanno},
  title =        {Sparse and skew hashing of $k$-mers},
  journal =      {Bioinformatics},
  year =         2022,
  volume =       38,
  month =        jun,
  pages =        {i185–i194},
  issn =         {1367-4811},
  doi =          {10.1093/bioinformatics/btac245},
  publisher =    {Oxford University Press (OUP)}
}

@Article{sshash2,
  author       = {Pibiri, Giulio Ermanno and Patro, Rob},
  title        = {Optimizing sparse and skew hashing: faster $k$-mer
                  dictionaries},
  journal      = {Bioinformatics},
  year         = 2026,
  volume       = 42,
  month        = {July},
  issn         = {1367-4811},
  doi          = {10.1093/bioinformatics/btag264},
  publisher    = {Oxford University Press (OUP)}
}

@Article{kache-hash-preprint,
  author       = {Khan, Jamshed and Patro, Rob and Pandey, Prashant},
  title        = {$k$ache-hash: A dynamic, concurrent, and cache-efficient
                  hash table for streaming $k$-mer operations},
  year         = 2026,
  month        = Feb,
  doi          = {10.64898/2026.02.13.705625},
  journal      = {openRxiv}
}

@article{bdanchors,
  author =       {Loukides, Grigorios and Pissis, Solon P. and Sweering,
                  Michelle},
  title =        {Bidirectional String Anchors for Improved Text Indexing and
                  Top-$K$ Similarity Search},
  journal =      {IEEE Transactions on Knowledge and Data Engineering},
  year =         2023,
  volume =       35,
  number =       11,
  month =        nov,
  pages =        {11093–11111},
  issn =         {2326-3865},
  doi =          {10.1109/tkde.2022.3231780},
  publisher =    {Institute of Electrical and Electronics Engineers (IEEE)}
}

@article{minimizers,
  author =       {Roberts, Michael and Hayes, Wayne and Hunt, Brian R. and
                  Mount, Stephen M. and Yorke, James A.},
  title =        {Reducing storage requirements for biological sequence
                  comparison},
  journal =      {Bioinformatics},
  year =         2004,
  volume =       20,
  number =       18,
  month =        jul,
  pages =        {3363–3369},
  issn =         {1367-4803},
  doi =          {10.1093/bioinformatics/bth408},
  publisher =    {Oxford University Press (OUP)}
}

@article{improved-minimizers,
  author =       {Mar\c{c}ais, Guillaume and Pellow, David and Bork, Daniel and
                  Orenstein, Yaron and Shamir, Ron and Kingsford, Carl},
  title =        {Improving the performance of minimizers and winnowing schemes},
  journal =      {Bioinformatics},
  year =         2017,
  volume =       33,
  number =       14,
  month =        jul,
  pages =        {i110–i117},
  issn =         {1367-4811},
  doi =          {10.1093/bioinformatics/btx235},
  publisher =    {Oxford University Press (OUP)}
}

@article{mdbg,
  author =       {Ekim, Bar\i{}\c{s} and Berger, Bonnie and Chikhi, Rayan},
  title =        {Minimizer-space {De Bruijn} graphs: Whole-genome assembly of
                  long reads in minutes on a personal computer},
  journal =      {Cell Systems},
  year =         2021,
  volume =       12,
  number =       10,
  month =        oct,
  pages =        {958--968.e6},
  issn =         {2405-4712},
  doi =          {10.1016/j.cels.2021.08.009},
  publisher =    {Elsevier BV}
}

@inproceedings{docks-wabi,
  title =        {Compact Universal k-mer Hitting Sets},
  year =         2016,
  author =       {Orenstein, Yaron and Pellow, David and Mar\c{c}ais, Guillaume
                  and Shamir, Ron and Kingsford, Carl},
  booktitle =    {Algorithms in Bioinformatics},
  publisher =    {Springer International Publishing},
  isbn =         9783319436814,
  pages =        {257–268},
  doi =          {10.1007/978-3-319-43681-4_21},
  issn =         {1611-3349}
}

@article{docks,
  author =       {Orenstein, Yaron and Pellow, David and Mar\c{c}ais, Guillaume
                  and Shamir, Ron and Kingsford, Carl},
  title =        {Designing small universal k-mer hitting sets for improved
                  analysis of high-throughput sequencing},
  journal =      {PLOS Computational Biology},
  year =         2017,
  editor =       {Raphael, Benjamin J.},
  volume =       13,
  number =       10,
  month =        oct,
  pages =        {e1005777},
  issn =         {1553-7358},
  doi =          {10.1371/journal.pcbi.1005777},
  publisher =    {Public Library of Science (PLoS)}
}

@article{asymptotic-optimal-minimizers,
  author =       {Mar\c{c}ais, Guillaume and DeBlasio, Dan and Kingsford, Carl},
  title =        {Asymptotically optimal minimizers schemes},
  journal =      {Bioinformatics},
  year =         2018,
  volume =       34,
  number =       13,
  month =        jun,
  pages =        {i13–i22},
  issn =         {1367-4811},
  doi =          {10.1093/bioinformatics/bty258},
  publisher =    {Oxford University Press (OUP)}
}

@article{minimum-decycling-set,
  author =       {Pellow, David and Pu, Lianrong and Ekim, Bari\c{s} and Kotlar,
                  Lior and Berger, Bonnie and Shamir, Ron and Orenstein, Yaron},
  title =        {Efficient minimizer orders for large values of $k$ using
                  minimum decycling sets},
  journal =      {Genome Research},
  year =         2023,
  month =        aug,
  issn =         {1549-5469},
  doi =          {10.1101/gr.277644.123},
  publisher =    {Cold Spring Harbor Laboratory}
}

@article{small-uhs,
  author =       {Zheng, Hongyu and Kingsford, Carl and Mar\c{c}ais, Guillaume},
  title =        {Lower Density Selection Schemes via Small Universal Hitting
                  Sets with Short Remaining Path Length},
  journal =      {Journal of Computational Biology},
  year =         2021,
  volume =       28,
  number =       4,
  month =        apr,
  pages =        {395–409},
  issn =         {1557-8666},
  doi =          {10.1089/cmb.2020.0432},
  publisher =    {Mary Ann Liebert Inc}
}

@inproceedings{modmini,
  author =       {Groot Koerkamp, Ragnar and Pibiri, Giulio Ermanno},
  title =        {The mod-minimizer: A Simple and Efficient Sampling Algorithm
                  for Long $k$-mers},
  booktitle =    {WABI 2024},
  pages =        {11:1--11:23},
  series =       {LIPIcs},
  isbn =         {978-3-95977-340-9},
  issn =         {1868-8969},
  year =         2024,
  volume =       312,
  urn =          {urn:nbn:de:0030-drops-206552},
  doi =          {10.4230/LIPIcs.WABI.2024.11}
}

@article{syncmers,
  author =       {Edgar, Robert},
  title =        {Syncmers are more sensitive than minimizers for selecting
                  conserved k‑mers in biological sequences},
  journal =      {PeerJ},
  year =         2021,
  volume =       9,
  month =        feb,
  pages =        {e10805},
  issn =         {2167-8359},
  doi =          {10.7717/peerj.10805},
  publisher =    {PeerJ}
}

@Article{sampling-lower-bound,
  author =       {Kille, Bryce and Groot Koerkamp, Ragnar and McAdams, Drake and
                  Liu, Alan and Treangen, Todd J},
  title =        {A {near-tight} lower bound on the density of forward sampling
                  schemes},
  journal =      {Bioinformatics},
  year =         2024,
  editor =       {Ponty, Yann},
  month =        dec,
  issn =         {1367-4811},
  doi =          {10.1093/bioinformatics/btae736},
  publisher =    {Oxford University Press (OUP)}
}

@InProceedings{random-mini-density,
  title =        {Expected Density of Random Minimizers},
  year =         2025,
  author =       {Golan, Shay and Shur, Arseny M.},
  booktitle =    {SOFSEM 2025: Theory and Practice of Computer Science},
  publisher =    {Springer Nature Switzerland},
  isbn =         9783031826702,
  pages =        {347–360},
  doi =          {10.1007/978-3-031-82670-2_25},
  ISSN =         {1611-3349}
}

@Article{greedymini,
  author =       {Golan, Shay and Tziony, Ido and Kraus, Matan and Orenstein,
                  Yaron and Shur, Arseny},
  title =        {{GreedyMini: generating low-density DNA minimizers}},
  journal =      {Bioinformatics},
  year =         2025,
  volume =       41,
  month =        jul,
  pages =        {275-284},
  issn =         {1367-4811},
  doi =          {10.1093/bioinformatics/btaf251},
  publisher =    {Oxford University Press (OUP)}
}

@Article{oc-modmini,
  author =       {Groot Koerkamp, Ragnar and Liu, Daniel and Pibiri, Giulio
                  Ermanno},
  title =        {The open-closed mod-minimizer algorithm},
  journal =      {Algorithms for Molecular Biology},
  year =         2025,
  volume =       20,
  number =       4,
  month =        {March},
  issn =         {1748-7188},
  doi =          {10.1186/s13015-025-00270-0},
  publisher =    {Springer Science and Business Media LLC}
}

@article{minimizer-sketches,
  title =        {Creating and using minimizer sketches in computational
                  genomics},
  author =       {Zheng, Hongyu and Mar{\c{c}}ais, Guillaume and Kingsford,
                  Carl},
  doi =          {10.1089/cmb.2023.0094},
  journal =      {Journal of Computational Biology},
  number =       12,
  pages =        {1251--1276},
  publisher =    {Mary Ann Liebert, Inc., publishers 140 Huguenot Street, 3rd
                  Floor New~…},
  volume =       30,
  year =         2023,
}

@Article{minimally-overlapping-words,
  author =       {Frith, Martin C and Noé, Laurent and Kucherov, Gregory},
  title =        {Minimally overlapping words for sequence similarity search},
  journal =      {Bioinformatics},
  year =         2020,
  editor =       {Ponty, Yann},
  volume =       36,
  number =       {22–23},
  month =        dec,
  pages =        {5344–5350},
  issn =         {1367-4811},
  doi =          {10.1093/bioinformatics/btaa1054},
  publisher =    {Oxford University Press (OUP)}
}

@Article{non-overlapping-codes,
  author =       {Blackburn, Simon R.},
  title =        {Non-Overlapping Codes},
  journal =      {IEEE Transactions on Information Theory},
  year =         2015,
  volume =       61,
  number =       9,
  month =        sep,
  pages =        {4890–4894},
  issn =         {1557-9654},
  doi =          {10.1109/tit.2015.2456634},
  publisher =    {Institute of Electrical and Electronics Engineers (IEEE)}
}

@InProceedings{simd-minimizers,
  author =       {Groot Koerkamp, Ragnar and Martayan, Igor},
  title =        {{SimdMinimizers: Computing Random Minimizers, fast}},
  booktitle =    {{SEA 2025}},
  pages =        {20:1--20:19},
  series =       {LIPIcs},
  ISBN =         {978-3-95977-375-1},
  ISSN =         {1868-8969},
  year =         2025,
  volume =       338,
  URN =          {urn:nbn:de:0030-drops-232581},
  doi =          {10.4230/LIPIcs.SEA.2025.20}
}

@InProceedings{u-index,
  author =       {Ayad, Lorraine A. K. and Fici, Gabriele and Groot Koerkamp,
                  Ragnar and Loukides, Grigorios and Patro, Rob and Pibiri,
                  Giulio Ermanno and Pissis, Solon P.},
  title =        {{U-Index: A Universal Indexing Framework for Matching Long
                  Patterns}},
  booktitle =    {{SEA 2025}},
  pages =        {4:1--4:18},
  series =       {LIPIcs},
  ISBN =         {978-3-95977-375-1},
  ISSN =         {1868-8969},
  year =         2025,
  volume =       338,
  URN =          {urn:nbn:de:0030-drops-232420},
  doi =          {10.4230/LIPIcs.SEA.2025.4}
}

@Article{mashmap,
  author =       {Jain, Chirag and Dilthey, Alexander and Koren, Sergey and
                  Aluru, Srinivas and Phillippy, Adam M.},
  title =        {A Fast Approximate Algorithm for Mapping Long Reads to Large
                  Reference Databases},
  journal =      {Journal of Computational Biology},
  year =         2018,
  volume =       25,
  number =       7,
  month =        jul,
  pages =        {766–779},
  issn =         {1557-8666},
  doi =          {10.1089/cmb.2018.0036},
  publisher =    {Mary Ann Liebert Inc}
}

@Article{selection,
  author =       {Blum, Manuel and Floyd, Robert W. and Pratt, Vaughan and
                  Rivest, Ronald L. and Tarjan, Robert E.},
  title =        {Time bounds for selection},
  journal =      {Journal of Computer and System Sciences},
  year =         1973,
  volume =       7,
  number =       4,
  month =        aug,
  pages =        {448–461},
  issn =         {0022-0000},
  doi =          {10.1016/s0022-0000(73)80033-9},
  publisher =    {Elsevier BV}
}

@phdthesis{thesis,
  author =       {Groot Koerkamp, Ragnar},
  title =        {Optimal Throughput Bioinformatics},
  school =       {ETH Zurich},
  year =         2025,
  doi =          {10.3929/ETHZ-C-000783091},
  copyright =    {http://rightsstatements.org/page/InC-NC/1.0/}
}

@Article{deacon-preprint,
  author       = {Constantinides, Bede and Lees, John and Crook, Derrick W},
  title        = {Deacon: fast sequence filtering and contaminant depletion},
  year         = 2025,
  month        = jun,
  doi          = {10.1101/2025.06.09.658732},
  journal      = {bioRxiv}
}

@InProceedings{minimal-maximal-suffix-cpm13,
  title        = {On Minimal and Maximal Suffixes of a Substring},
  year         = 2013,
  author       = {Babenko, Maxim and Kolesnichenko, Ignat and Starikovskaya, Tatiana},
  booktitle    = {Combinatorial Pattern Matching},
  publisher    = {Springer Berlin Heidelberg},
  isbn         = 9783642389054,
  pages        = {28–37},
  doi          = {10.1007/978-3-642-38905-4_5},
  ISSN         = {1611-3349}
}

@article{minimal-maximal-suffix-tcs,
title = {Computing minimal and maximal suffixes of a substring},
journal = {Theoretical Computer Science},
volume = {638},
pages = {112-121},
year = {2016},
note = {Pattern Matching, Text Data Structures and Compression},
issn = {0304-3975},
doi = {10.1016/j.tcs.2015.08.023},
author = {Maxim Babenko and Paweł Gawrychowski and Tomasz Kociumaka and Ignat Kolesnichenko and Tatiana Starikovskaya}
}

@InProceedings{minimal-suffix-cpm16,
  author =	{Kociumaka, Tomasz},
  title =	{{Minimal Suffix and Rotation of a Substring in Optimal Time}},
  booktitle =	{27th Annual Symposium on Combinatorial Pattern Matching (CPM 2016)},
  pages =	{28:1--28:12},
  series =	{Leibniz International Proceedings in Informatics (LIPIcs)},
  ISBN =	{978-3-95977-012-5},
  ISSN =	{1868-8969},
  year =	{2016},
  volume =	{54},
  editor =	{Grossi, Roberto and Lewenstein, Moshe},
  publisher =	{Schloss Dagstuhl -- Leibniz-Zentrum f{\"u}r Informatik},
  address =	{Dagstuhl, Germany},
  URN =		{urn:nbn:de:0030-drops-60626},
  doi =		{10.4230/LIPIcs.CPM.2016.28}
}

@article{factorizing-words,
title = {Factorizing words over an ordered alphabet},
journal = {Journal of Algorithms},
volume = {4},
number = {4},
pages = {363-381},
year = {1983},
issn = {0196-6774},
doi = {10.1016/0196-6774(83)90017-2},
author = {Jean-Pierre Duval}
}

@article{smallest-rotation,
title = {Lexicographically least circular substrings},
journal = {Information Processing Letters},
volume = {10},
number = {4},
pages = {240-242},
year = {1980},
issn = {0020-0190},
doi = {https://doi.org/10.1016/0020-0190(80)90149-0},
author = {Kellogg S. Booth}
}

@article {10-minimizers,
	author = {Shur, Arseny and Tziony, Ido and Orenstein, Yaron},
	title = {10-minimizers: a promising class of constant-space minimizers},
	elocation-id = {2026.03.16.712052},
	year = {2026},
	doi = {10.64898/2026.03.16.712052},
	publisher = {Cold Spring Harbor Laboratory},
	journal = {bioRxiv}
}

@article{vigemers,
  author       = {Ingels, Florian and Limasset, Antoine and Marchet, Camille and
                  Salson, Mikaël},
  title        = {Vigemers: on the number of $k$-mers sharing the same XOR-based
                  minimizer},
  year         = 2026,
  doi          = {10.48550/arXiv.2602.03337},
  journal      = {arXiv},
  copyright    = {Creative Commons Attribution 4.0 International}
}

@inProceedings{multiminimizers,
  author       = {Ingels, Florian and Robidou, Lucas and Martayan, Igor and
                  Marchet, Camille and Limasset, Antoine},
  title        = {Minimizer Density revisited: Models and Multiminimizers},
  year         = 2026,
  month        = Nov,
  doi          = {10.1101/2025.11.21.689688},
  booktitle    = {RECOMB},
}

@article{on-minimizers-of-minimum-density,
  author       = {Shur, Arseny},
  title        = {On Minimizers of Minimum Density},
  year         = 2025,
  doi          = {10.48550/arXiv.2506.05277},
  journal    = {arXiv},
  copyright    = {Creative Commons Attribution 4.0 International}
}

@Article{optmini,
  author       = {Shur, Arseny and Tziony, Ido and Orenstein, Yaron},
  title        = {Generating minimum-density minimizers},
  year         = 2026,
  month        = Jan,
  doi          = {10.64898/2026.01.25.701585},
  journal      = {openRxiv}
}

@InProceedings{sus-anchor,
  author =	{Groot Koerkamp, Ragnar},
  title =	{{The Anti-Lexicographic SUS-Anchor: An Empirically Optimal Selection Scheme}},
  booktitle =	{26th International Conference on Algorithms for Bioinformatics (WABI 2026)},
  pages =	{22:1--22:15},
  series =	{Leibniz International Proceedings in Informatics (LIPIcs)},
  ISBN =	{978-3-95977-446-8},
  ISSN =	{1868-8969},
  year =	{2026},
  volume =	{390},
  editor =	{El-Mabrouk, Nadia and Vandin, Fabio},
  publisher =	{Schloss Dagstuhl -- Leibniz-Zentrum f{\"u}r Informatik},
  address =	{Dagstuhl, Germany},
  URN =		{urn:nbn:de:0030-drops-275261},
  doi =		{10.4230/LIPIcs.WABI.2026.22}
}

@Article{lc-anchors,
  author       = {Ayad, Lorraine A. K. and Loukides, Grigorios and Pissis, Solon P.},
  title        = {Text indexing for long patterns using locally consistent anchors},
  journal      = {The VLDB Journal},
  year         = 2025,
  volume       = 34,
  number       = 5,
  month        = {July},
  issn         = {0949-877X},
  doi          = {10.1007/s00778-025-00935-7},
  publisher    = {Springer Science and Business Media LLC}
}

\end{document}